\renewcommand{\appendix}{%
   \renewcommand{\section}{
        \secdef\Appendix\sAppendix}%
   \setcounter{section}{0}%
   \renewcommand{\thesection}{\Alph{section}}%
   \renewcommand{\theequation}{\thesection.\arabic{equation}}%
}
\newcommand{\Appendix}[2][?]{%
     \refstepcounter{section}%
     \setcounter{equation}{0}%
     \addcontentsline{toc}{appendix}%
          {\protect\numberline{\appendixname~\thesection} #1}%
     \vspace{\baselineskip}%
     {\noindent\large\bfseries\appendixname\ \thesection: #2\par}%
     \sectionmark{#1}\vspace{\baselineskip}}
\newcommand{\sAppendix}[1]{%
     {\noindent\large\bfseries\appendixname\:: #1\par}%
     \sectionmark{#1}\vspace{\baselineskip}}
\begin{document}

\thispagestyle{empty}

\begin{center}

{\Large \bf
Late-time large-distance asymptotics of the transverse correlation
functions of the XX chain in the space-like regime}

\vspace{10mm}

{\large
Frank G\"{o}hmann,$^\dagger$
Karol K. Kozlowski$^\ast$ and Junji Suzuki$^\ddagger$}\\[3.5ex]
$^\dagger$Fakult\"at f\"ur Mathematik und Naturwissenschaften,\\
Bergische Universit\"at Wuppertal,
42097 Wuppertal, Germany\\[1.0ex]
$^\ast$Univ Lyon, ENS de Lyon, Univ Claude Bernard,\\ CNRS,
Laboratoire de Physique, F-69342 Lyon, France\\[1.0ex]
$^\ddagger$Department of Physics, Faculty of Science, Shizuoka University,\\
Ohya 836, Suruga, Shizuoka, Japan

\vspace{40mm}

{\large {\bf Abstract}}

\end{center}

\begin{list}{}{\addtolength{\rightmargin}{9mm}
               \addtolength{\topsep}{-5mm}}
\item
We derive an explicit expression for the leading term in the late-time,
large-distance asymptotic expansion of a transverse dynamical two-point
function of the XX chain in the spacelike regime. This expression is valid
for all non-zero finite temperatures and for all magnetic fields below
the saturation threshold. It is obtained here by means of a straightforward
term-by-term analysis of a thermal form factor series, derived in previous
work, and demonstrates the usefulness of the latter.
\end{list}

\clearpage

\section{Introduction}
The XX chain is a spin chain with Hamiltonian \cite{LSM61}
\begin{equation} \label{hxx}
     H_L = J \sum_{j = 1}^L \bigl( \s_{j-1}^x \s_j^x + \s_{j-1}^y \s_j^y \bigr)
           - \frac{h}{2} \sum_{j=1}^L \s_j^z \epc
\end{equation}
where the $\s_j^\a$, $\a = x, y, z$, are Pauli matrices acting on site
$j \in \{1, \dots, L\}$ of an $L$-site periodic lattice, $\s_0^\a = \s_L^\a$.
The parameters $J > 0$ and $h > 0$ denote the strengths of the spin-spin
interaction and of the applied magnetic field. We shall restrict the magnetic
field to values below the saturation threshold, $0 < h < 4J$.

In our recent work \cite{GKKKS17} we have derived a novel form factor series for
the transverse dynamical correlation function
\begin{equation} \label{defcorrmp}
     \bigl\< \s_1^- \s_{m+1}^+ (t) \bigr\>_T =
        \lim_{L \rightarrow + \infty}
	\frac{\tr \{ \re^{- H_L/T} \s_1^- \re^{\i H_L t} \s_{m+1}^+ \re^{- \i H_L t} \}}
	     {\tr \{ \re^{- H_L/T}\}}
\end{equation}
of the XX chain in equilibrium with a heat bath at temperature $T$.
It measures the space-time evolution of a local perturbation relating
two points at distance $m$ and temporal separation~$t$. Our series
originates from a form factor expansion related to the quantum
transfer matrix \cite{DGK13a}. It can be resummed into a `Fredholm
determinant representation' consisting of a prefactor times a Fredholm
determinant of an integrable integral operator \cite{GKS19app}. The
latter is different from the Fredholm determinant representation
derived by Colomo et al.\ in \cite{CIKT93}.

For Fredholm determinants and resolvent kernels of integrable integral
operators a general method \cite{DeZh93} is available that allows one
to analyse their asymptotic dependence on parameters. Starting with
the Fredholm determinant representation obtained in \cite{CIKT93} the authors
of \cite{IIKS93b} applied this `nonlinear steepest-descent method'
to the late-time, large-distance analysis of (\ref{defcorrmp}) at a
fixed ratio $\a = m/(4Jt)$. They found exponential decay of the form
\begin{equation} \label{symbasy}
     \bigl\< \s_1^- \s_{m+1}^+ (t) \bigr\>_T \sim C t^\nu \re^{- m/\x} \epc
\end{equation}
where $C$, $\nu$ and $\x$ depend on $T$, $h$ and $\a$. The
functional dependence differs according to whether $\a > 1$ or
$\a < 1$. The former regime, in which the spatial distance in units
of $4J$ is larger than the temporal separation, is called `spacelike',
while the latter is referred to as `the timelike regime'.

In \cite{IIKS93b} the authors considered magnetic fields below the
saturation threshold, $0 < h < 4J$. They obtained explicit expressions
for $\nu$ and $\x$ in both, space- and timelike regimes. Later the
`constant term' $C$ was obtained for $h > 4J$ in \cite{Jie98}. Although
the nonlinear steepest descent method would allow one to calculate
$C$ for $0 < h < 4J$ as well, it seems that nobody has ever attempted
to do so. This may be partially attributed to the cumbersome nature
of the required calculations. 

\enlargethispage{2ex}

In this work we reconsider the late-time, large-distance asymptotic
analysis of the two-point function $\<\s_1^- \s_{m+1}^+ (t) \bigr\>_T$
in the spacelike regime. It turns out that the novel thermal form
factor series derived in \cite{GKKKS17} allows us to obtain the
asymptotics, including the constant term $C$, by a rather elementary
term-by-term analysis of the series that avoids the use of any
Riemann-Hilbert problem.

On the other hand, our thermal form factor series can be resummed
into a Fredholm determinant representation as well. As we shall
see below this Fredholm determinant representation is rather
different from the one of Its et al.\ \cite{IIKS93b} in that the
term that provides the leading late-time, large-distance asymptotics
in the spacelike regime appears to be pulled out as a pre-factor.
Our finding strikingly resembles in structure the Borodin-Okounkov,
Geronimo-Case formula \cite{BoOk00,GeCa79,BaWi00} for a Toeplitz
determinant generated by a symbol satisfying the hypotheses of the
Szeg\"o theorem.

We should point out that the late-time, large-distance
asymptotics considered here do not commute with the low and
high-temperature asymptotics. At any finite temperature the
asymptotic decay of the transverse two-point functions is exponential
and given by~(\ref{symbasy}). If, however, the temperature is
send to zero first, the correlation functions will vary algebraically 
\cite{Kozlowski19pp}. We shall consider this limit for the more
general XXZ chain in subsequent work. If we send the temperature
to infinity first, then the behaviour of the correlation functions
in `time-direction' becomes Gaussian \cite{BrJa76,PeCa77}. We have
recently analysed the latter situation in full generality in
\cite{GKS19app}, which is one of two companion papers of this
work. In the other one \cite{GKSS19pp} we evaluate the two-point 
function numerically, for a wide range of temperature and
space-time separations, directly from the novel Fredholm determinant
representation.

\section{Thermal form factor series representation} \label{sec:ffseries}
The starting point of our analysis will be a thermal form factor series
for the transversal two-point function (\ref{defcorrmp}) derived in
\cite{GKKKS17}. The series is a series of multiple integrals which is
most compactly expressed in terms of certain functions characteristic
of the XX chain. These are in first place the momentum $p$ and the
energy $\eps$ of the single-particle excitations of the Hamiltonian
expressed in terms of the rapidity variable,
\begin{equation}
     p(\la) = - \i \ln \bigl( - \i \tgh(\la) \bigr) \epc \qd
     \eps (\la) = h + 2 J p' (\la) \epp
\end{equation}
Here we choose the principal branch of the logarithm in the definition
of the momentum function $p(\la)$, cutting the complex plane from
$- \i \p/2$ to zero modulo $\i \p$. Because of the $\p \i$-periodicity
of the momentum, shared by all other functions in our form factor series,
we may think of these functions as being defined on a cylinder of
circumference $\p$, which is equivalent to restricting their values to the
`fundamental strip'
\begin{equation}
     {\cal S} = \Bigl\{ \la \in {\mathbb C} \Big|
                        - \frac{\p}{4} \le \Im \la < \frac{3 \p}{4} \Bigr\} \epp
\end{equation}

It is easy to see that $\eps$ has precisely two roots
\begin{equation}
     \la_F^\pm = \frac{\i \p}{4} \pm z_F \epc \qd
           z_F = \2 \arch \biggl( \frac{4 J}{h} \biggr)
\end{equation}
in ${\cal S}$. These roots are called the Fermi rapidities.
The value
\begin{equation} \label{defpf}
     p_F = p(\la_F^-) = \arccos \biggl( \frac{h}{4J} \biggr)
\end{equation}
of the momentum function evaluated at the left Fermi rapidity is the 
Fermi momentum. Using the Fermi rapidities we can represent the
energy function as
\begin{equation} \label{epsfactor}
     \eps(\la) = - h \, p'(\la) \sh(\la - \la_F^-) \sh(\la - \la_F^+) \epp
\end{equation}
Energy and momentum functions $\eps$ and $p$ are real on the lines
$x \pm \i \p/4$, $x \in {\mathbb R}$, where they take the values
\begin{subequations}
\label{pepsreal}
\begin{align}
     & \eps(x \pm \i \p/4) = h \mp \frac{4J}{\ch (2 x)} \epc \\[1ex]
     & p(x + \i \p/4) = - \frac{\p}{2} + 2 \arctg \bigl( \re^{- 2 x} \bigr) \epc \\
     & p(x - \i \p/4) = - \p \sign(x) + \frac{\p}{2} - 2 \arctg \bigl( \re^{-2 x} \bigr) \epp
\end{align}
\end{subequations}

The one-particle energy determines the function
\begin{equation} \label{funz}
     z(\la) = \frac{1}{2 \p \i} \ln \biggl[ \cth \biggl(\frac{\eps(\la)}{2 T} \biggr) \biggr]
	      \epp
\end{equation}
Most of the functions occurring in the form-factor series below are
defined as integrals over two simple closed contours ${\cal C}_h$
and ${\cal C}_p$, involving $p$, $\eps$, $z$ and some hyperbolic
functions.
\begin{figure}
\begin{center}
\includegraphics[width=.93\textwidth]{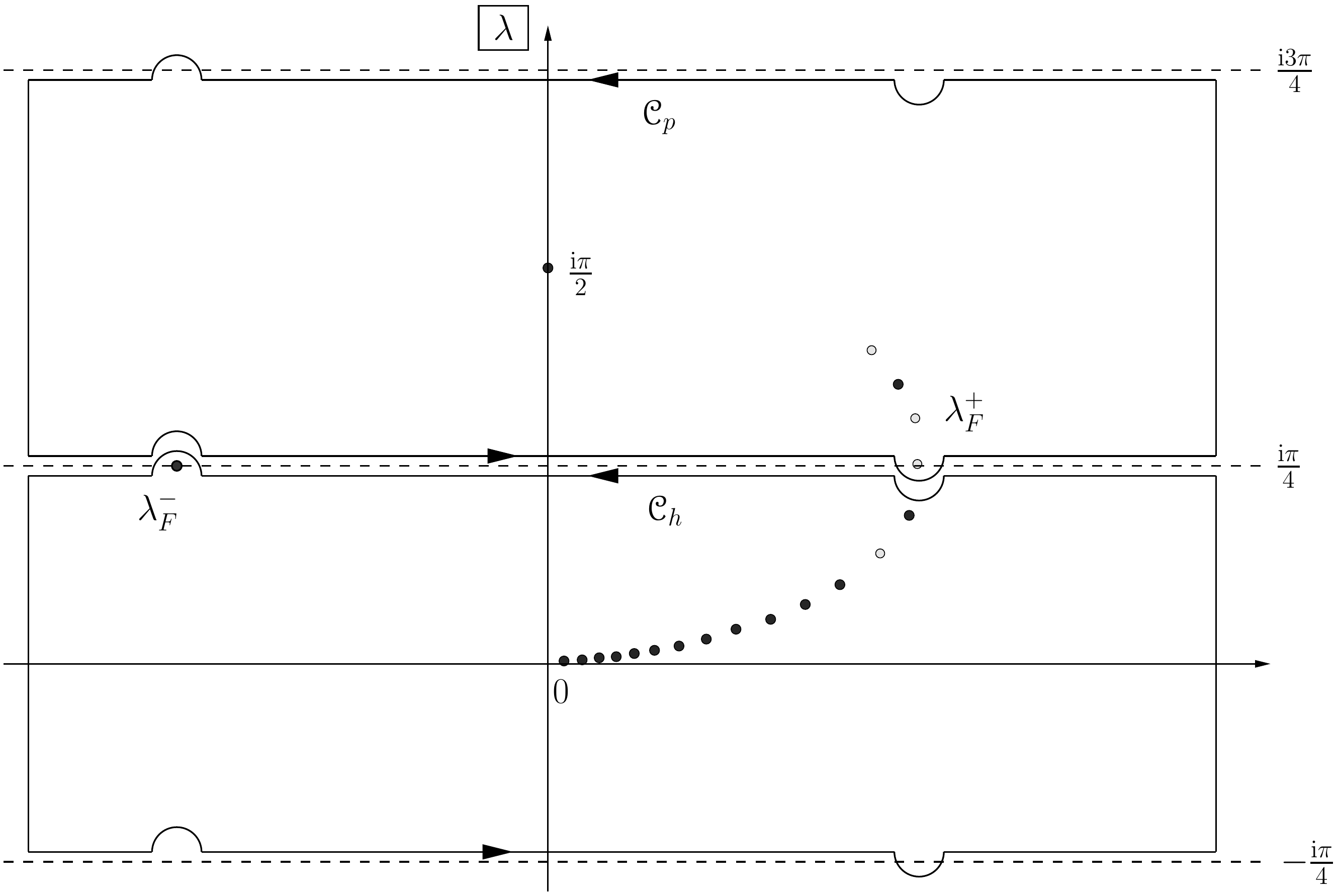}
\caption{\label{fig:xx_ch_and_cp} Sketch of the hole and
particle contours ${\cal C}_h$ and ${\cal C}_p$. The Fermi rapidity $\la_F^-$
is located inside ${\cal C}_h$, while $\la_F^+$ lies inside ${\cal C}_p$.}
\end{center}
\end{figure}
The `hole contour' ${\cal C}_h$ and the `particle contour' ${\cal C}_p$
are sketched in Fig.~\ref{fig:xx_ch_and_cp}. They are defined in such a way
that ${\cal C}_h$ encloses all roots of $\re^{- \eps(x)/T} - 1$ located
inside the strip $- \p/4 < \Im x < \p/4$ (`the holes') as well as the
left Fermi rapidity $\la_F^-$, whereas ${\cal C}_p$ encloses the roots of
$\re^{- \eps(x)/T} - 1$ inside the strip $\p/4 < \Im x < 3 \p/4$ (`the
particles') as well as the right Fermi rapidity $\la_F^+$.

Given these contours we define the Cauchy transforms
\begin{equation} \label{phh}
     \PH_h (x) = \frac{\i p'(x)}{2}
                 \exp \biggl\{ \i \int_{{\cal C}_h} \rd \la \: p'(\la) z(\la)
	                       \frac{\sh(x + \la)}{\sh(x - \la)} \biggr\}
\end{equation}
for all $x \in {\cal S} \setminus {\cal C}_h$, and
\begin{equation} \label{php}
     \PH_p (x) = \frac{\i p'(x)}{2}
                 \exp \biggl\{ - \i \int_{{\cal C}_p} \rd \la \: p'(\la) z(\la)
	                       \frac{\sh(x + \la)}{\sh(x - \la)} \biggr\}
\end{equation}
for all $x \in {\cal S} \setminus {\cal C}_p$.
For fixed $x \in \Int ({\cal C}_h) \cup \Int ({\cal C}_p)$ the function
$\sh(x + \la)/\sh(x - \la)$ is holomorphic in $\la$ for all $\la \in
{\cal S} \setminus \bigl(\Int ({\cal C}_h) \cup \Int ({\cal C}_p)\bigr)$.
Since the integrands in (\ref{phh}), (\ref{php}) are rapidly decaying
for $\la \rightarrow \pm \infty$, we may deform the contours and conclude
that
\begin{equation}
     \PH_h (x) = \PH_p (x) \qd
        \text{for all $x \in \Int ({\cal C}_h) \cup \Int ({\cal C}_p$)} \epp
\end{equation}
Another function needed below is the square of a generalized Cauchy
determinant,
\begin{equation}
     {\cal D} \bigl(\{x_j\}_{j=1}^m, \{y_k\}_{k=1}^n\bigr) =
        \frac{\bigl[ \prod_{1 \le j < k \le m} \sh^2 (x_j - x_k) \bigr]
              \bigl[ \prod_{1 \le j < k \le n} \sh^2 (y_j - y_k) \bigr]}
             {\prod_{j=1}^m \prod_{k=1}^n \sh^2 (x_j - y_k)} \epp
\end{equation}

After these preparations we can now recall the form factor series derived
in \cite{GKKKS17}. Using the above notation and performing several more or less 
obvious simplifications it can be written as
\begin{multline} \label{ffseriesxxtrans}
     \bigl\<\s_1^- \s_{m+1}^+ (t)\bigr\>_T = (-1)^m {\cal F} (m)
        \sum_{n=1}^\infty \frac{(-1)^n}{n! (n-1)!}
	   \prod_{j=1}^n \int_{{\cal C}_h} \frac{\rd x_j}{\p \i}
	      \frac{\PH_p (x_j) \re^{\i (m p(x_j) - t \eps(x_j))}}{1 - \re^{\eps (x_j)/T}}
	      \\ \times
	   \prod_{k=1}^{n-1} \int_{{\cal C}_p} \frac{\rd y_k}{\p \i}
	      \frac{\re^{- \i (m p(y_k) - t \eps(y_k))}}
	           {\PH_h (y_k) \bigl(1 - \re^{- \eps(y_k)/T}\bigr)} \:
		    {\cal D} \bigl( \{x_j\}_{j=1}^n, \{y_k\}_{k=1}^{n-1} \bigr) \epc
\end{multline}
where
\begin{multline} \label{deff}
     {\cal F} (m) = \re^{- \i m p_F}
        \exp \biggl\{- \int_{{\cal C}_h' \subset {\cal C}_h} \rd \la \: z(\la)
	\int_{{\cal C}_h} \rd \m \: \cth' (\la - \m) z(\m) \biggr\}\\
	\times \exp \biggl\{- m \int_{{\cal C}_h} \frac{\rd \la}{2 \p} p' (\la)
	\ln \biggl| \cth \biggl(\frac{\eps(\la)}{2 T} \biggr) \biggr| \biggr\} \epp
\end{multline}
The contour ${\cal C}_h'$ is tightly enclosed by ${\cal C}_h$.

\section{Asymptotics in the spacelike regime}
\begin{theorem*}
In the spacelike regime $m > 4Jt$ the form factor series
(\ref{ffseriesxxtrans}) is absolutely convergent and determines
the late-time, large-distance asymptotics of the transverse dynamical
correlation function of the XX chain as
\begin{multline} \label{xxtransasy}
     \bigl\<\s_1^- \s_{m+1}^+ (t)\bigr\>_T = C(T,h) (-1)^m
        \exp \biggl\{
	   - m \int_{{\cal C}_h} \frac{\rd \la}{2 \p} p' (\la)
	   \ln \biggl| \cth \biggl(\frac{\eps(\la)}{2 T} \biggr) \biggr| \biggr\} \\[1ex]
	   \times \bigl(1 + \CO(t^{-\infty})\bigr) \epc
\end{multline}
where
\begin{equation} \label{asyconstant}
     C(T,h) = \frac{2 T \PH_p (\la_F^-)}{\eps' (\la_F^-)}
              \exp \biggl\{- \int_{{\cal C}_h' \subset {\cal C}_h} \rd \la \: z(\la)
	                     \int_{{\cal C}_h} \rd \m \: \cth' (\la - \m) z(\m) \biggr\} \epp
\end{equation}
\end{theorem*}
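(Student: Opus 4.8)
The plan is to extract the leading behaviour from a single residue of the $n=1$ summand and to show that everything else is suppressed by $\CO(t^{-\infty})$ in the spacelike regime. Write $S_n$ for the $n$-th summand of the series in (\ref{ffseriesxxtrans}), so that $\bigl\<\s_1^-\s_{m+1}^+(t)\bigr\>_T=(-1)^m{\cal F}(m)\sum_{n\ge1}S_n$. First I would establish absolute convergence of (\ref{ffseriesxxtrans}): by deforming ${\cal C}_h$ and ${\cal C}_p$ slightly I can arrange that the oscillatory factors $\re^{\i(mp(x_j)-t\eps(x_j))}$ and $\re^{-\i(mp(y_k)-t\eps(y_k))}$ are exponentially small away from a neighbourhood of the Fermi points, while on the compact contours the factor ${\cal D}$ is bounded by a constant to the power $n^2$; together with the $1/[n!(n-1)!]$ weights and Hadamard-type bounds this produces a convergent majorant. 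This simultaneously legitimises the term-by-term asymptotic analysis that follows.

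Next I would isolate the leading term. In the $n=1$ summand the determinant ${\cal D}$ reduces to $1$, so $S_1=-\int_{{\cal C}_h}\frac{\rd x}{\p\i}\frac{\PH_p(x)\re^{\i(mp(x)-t\eps(x))}}{1-\re^{\eps(x)/T}}$. Since ${\cal C}_h$ encircles $\la_F^-$ and $\eps(\la_F^-)=0$, the factor $1/(1-\re^{\eps/T})$ has a simple pole there, with $1-\re^{\eps(x)/T}\simeq-\eps'(\la_F^-)(x-\la_F^-)/T$. Picking up this residue and using $p(\la_F^-)=p_F$ from (\ref{defpf}) produces $2T\PH_p(\la_F^-)\re^{\i mp_F}/\eps'(\la_F^-)$. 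Multiplying by $(-1)^m{\cal F}(m)$ from (\ref{deff}), whose first exponential is exactly the double-contour factor appearing in $C(T,h)$ and whose explicit $\re^{-\i mp_F}$ cancels the $\re^{\i mp_F}$ just generated, reproduces $C(T,h)(-1)^m$ times the $m$-linear exponential of (\ref{xxtransasy}).

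It then remains to show that the discarded pieces are $\CO(t^{-\infty})$ relative to this leading term. The remainder of the $n=1$ integral, once the $\la_F^-$ pole is subtracted, carries the phase $mp(x)-t\eps(x)$; writing $m=4Jt\a$ with $\a>1$ and checking that $4J\a\,p'(x)-\eps'(x)$ has no zero on the relevant contour in the spacelike regime, repeated integration by parts yields decay faster than any power of $t$. For $n\ge2$ the essential mechanism is that the numerator of ${\cal D}$ contains the factors $\sh^2(x_j-x_k)$ and $\sh^2(y_j-y_k)$, so ${\cal D}$ vanishes whenever two hole rapidities or two particle rapidities collide. Consequently at most one $x_j$ can be pinned at $\la_F^-$ and one $y_k$ at $\la_F^+$ (where $\eps$ again vanishes and $p(\la_F^+)=-p_F$); every further integration runs against a non-stationary phase and is $\CO(t^{-\infty})$, whence $S_n=\CO(t^{-\infty})$ for all $n\ge2$.

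The main obstacle I anticipate is making this last step uniform: one must bound the $\CO(t^{-\infty})$ remainders of all summands simultaneously, with constants that remain summable against $1/[n!(n-1)!]$, so that the infinite sum of subleading terms is itself $\CO(t^{-\infty})$. A secondary difficulty is the bookkeeping of the further poles of $1/(1-\re^{\pm\eps/T})$ at the thermal holes and particles enclosed by the contours: I would have to verify that, in the spacelike regime $m>4Jt$, the magnitude $\re^{-m\Im p+t\Im\eps}$ of their residue contributions stays exponentially subdominant relative to the Fermi-point term, so that they do not compete with the leading order. Once these uniform estimates are secured, the interchange of summation with the large-$m,t$ limit is justified and (\ref{xxtransasy})--(\ref{asyconstant}) follow.
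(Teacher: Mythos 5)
Your proposal has the right skeleton, and it is in fact the skeleton of the paper's own proof: the leading term is the residue of the $n=1$ summand at $\la_F^-$ (your computation of it, including the cancellation of $\re^{\i m p_F}$ against the explicit $\re^{-\i m p_F}$ in ${\cal F}(m)$, is exactly right), at most one hole variable and one particle variable can be pinned at the Fermi points because ${\cal D}$ vanishes at coinciding arguments, and all remaining contributions must be shown negligible. However, the step you yourself flag as the ``main obstacle'' --- uniform control of the subleading terms with constants summable against $1/[n!(n-1)!]$ --- is precisely the step your plan does not close, and it is the heart of the matter. The missing idea is the paper's Lemma: for $\a>1$ one has $\6_y \Re g>0$ on ${\mathbb R}+\i\p/4$ and $\6_y \Re g<0$ on ${\mathbb R}-\i\p/4$ (and an analogous sign analysis on the vertical segments $\Re\la=\pm R$), so the hole contour can be deformed to a rectangle $\chsd$ lying entirely in the region $\Re g<0$ and the particle contour to $\cpsd$ lying in $\Re g>0$, while all thermal roots remain enclosed and only the poles at $\la_F^\mp$ are crossed. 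On these deformed contours the full integrands obey sup-norm bounds $|V_h|, |V_p|\le \k\,\re^{-\tau c}$ --- no oscillation argument, no derivatives --- so a single Hadamard estimate yields $|S_j^{(n)}|\le \frac{1}{n!}\,C_j^n\,\re^{-2n\tau c}$ (times an extra overall factor $\re^{-\tau c}$ or $\re^{-2\tau c}$ for $j\ne 1$), which gives absolute convergence and a remainder $\CO(\re^{-\tau c})$ (stronger than $\CO(t^{-\infty})$) in one stroke, with uniformity in $n$ trivial. Your substitute --- repeated integration by parts against the non-stationary phase --- requires differentiating the product of the $V_h$, $V_p$ factors and the Cauchy determinant ${\cal D}$ arbitrarily many times; the resulting derivative bounds grow combinatorially with $n$, and nothing in your plan shows they stay summable, so as written the interchange of summation with the large-$m,t$ limit is not justified.

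Two further points. First, your ``secondary difficulty'' concerning the thermal poles is a non-issue once the deformation is set up as in the Lemma: the hole and particle roots are never crossed (they stay inside $\chsd$ and $\cpsd$), so their residues never enter the analysis at all. Trying instead to evaluate the contour integrals as sums over all enclosed residues would be genuinely delicate, since the roots of $\re^{-\eps/T}=1$ accumulate at the pole of $\eps$ at $\la=0$, so there are infinitely many of them. Second, your convergence bound ``${\cal D}$ is bounded by a constant to the power $n^2$'' is not good enough: a majorant of the form $K^{n^2}/[n!(n-1)!]$ with $K>1$ diverges, because $n^2\ln K$ eventually dominates $2n\ln n$. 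What saves convergence is precisely Hadamard's inequality applied to the squared Cauchy determinant, $|{\cal D}|\le B^{2n}n^n$, for which $n^n/[n!(n-1)!]$ is summable; you do mention Hadamard, but it must replace, not accompany, the naive entrywise bound.
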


In preparation of the proof we introduce the short-hand notations
\begin{equation}
     \tau = 4Jt \epc \qd \a = \frac{m}{\tau}
\end{equation}
and the function
\begin{equation}
     g(\la) = \i \bigl(\a p (\la) + \cos (p(\la))\bigr)
\end{equation}
with real and imaginary parts $u(\la) = \Re g(\la)$ and
$v(\la) = \Im g(\la)$. Then the `wave factors' in
(\ref{ffseriesxxtrans}) take the form
\begin{equation}
     \re^{\pm \i (m p(\la) - t \eps (\la))} =
        \re^{\mp \i h t \pm \tau g(\la)} \epp
\end{equation}

We will be interested in the asymptotic behaviour of
(\ref{ffseriesxxtrans}) for large positive $\tau$ and fixed
$\a > 1$. As we shall see below, it is determined by the
poles of the integrands at $\la_F^\pm$. The saddle points
contribute only to the subleading corrections. This becomes
clear when we consider the function $g$ close to the lines
${\mathbb R} \pm \i \p/4$ and on the lines $\Re \la = \pm R$
for $R > 0$ large enough.

\begin{lemma*}
Fix $\a > 1$.
\begin{enumerate}
\item
Then $g' (\la) \ne 0$ for all $\la \in {\mathbb R} \pm \i \p/4 \mod \i \p$, i.e.\
there are no saddle points on these lines.
\item
Define the oriented contours
\begin{align} \label{defseccont}
     & \chsd = \Bigl[- R + \frac{\p \i}4 - \i \de, - R - \frac{\p \i}4 + \i \de \Bigr] \cup
               \Bigl[- R - \frac{\p \i}4 + \i \de, R - \frac{\p \i}4 + \i \de \Bigr]
	       \notag \\ & \mspace{69.mu} \cup
               \Bigl[R - \frac{\p \i}4 + \i \de, R + \frac{\p \i}4 - \i \de \Bigr] \cup
               \Bigl[R + \frac{\p \i}4 - \i \de, - R + \frac{\p \i}4 - \i \de \Bigr] \epc
	       \notag \\[1ex]
     & \cpsd = \chsd + \frac{\p \i}{2} \epc
\end{align}
where $R, \de > 0$. Then $R$ and $\de$ can be chosen in such a way that
$u(\la) < 0$ for all $\la \in \chsd$, $u(\la) > 0$ for all $\la \in \cpsd$
and all hole roots are inside $\chsd$, while all particle roots are inside
$\cpsd$.
\end{enumerate}
\end{lemma*}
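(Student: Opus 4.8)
Here is how I would attack this lemma.

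The plan is to reduce the whole statement to the analysis of a single explicit real‑analytic function $u=\Re g$ and to exploit a reflection symmetry that trades $\cpsd$ for $\chsd$. For part (i) I differentiate $g(\la)=\i\bigl(\a p(\la)+\cos p(\la)\bigr)$ to get $g'(\la)=\i p'(\la)\bigl(\a-\sin p(\la)\bigr)$, where $p'(\la)=-2\i/\sh(2\la)$ never vanishes. On the lines $\mathbb{R}\pm\i\p/4$ the momentum $p$ is real with $|\sin p|<1$ by (\ref{pepsreal}), so $\a-\sin p\ge\a-1>0$ and $g'\neq0$; the $\i\p$‑periodicity of $g'$ propagates this to all translated lines.

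For part (ii) the backbone is the closed form
\[
   u(\la)=\a\ln|\tgh\la|-\frac{\sh(2x)\cos(2y)}{\sh^{2}(2x)+\sin^{2}(2y)}\epc\qquad \la=x+\i y\epc
\]
which follows from $u=-\a\Im p-\Im\cos p$ together with $\Im p=-\ln|\tgh\la|$ and $\cos p(\la)=\i/\sh(2\la)$. Three facts are read off at once: $u\equiv0$ on $\Im\la=\pm\p/4$ (there $|\tgh\la|=1$ and $\cos2y=0$); $u$ is even in $y$; and, using $\cos p(\la+\i\p/2)=-\cos p(\la)$ and $p(\la+\i\p/2)=\p-p(\la)\bmod 2\p$, one gets $g(\la+\i\p/2)=\i\a\p(1+2n)-g(\la)$ for some integer $n$, hence $u(\la+\i\p/2)=-u(\la)$. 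This last identity reduces the assertion on $\cpsd=\chsd+\i\p/2$ to the one on $\chsd$, and evenness in $y$ reduces the bottom edge to the top edge. It therefore suffices to prove $u<0$ on the top edge $\Im\la=\p/4-\de$ and on the two vertical edges $\Re\la=\pm R$.

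On the top edge put $c=\sin2\de$, $d=\cos2\de$, $C=\ch(2x)$, $S=\sh(2x)$. One checks $|\tgh\la|\le1$ on the whole strip $|\Im\la|<\p/4$, so the first term equals $-\a\arth(c/C)\le0$. For $x\ge0$ the second term is also $\le0$ and $u<0$ follows; for $x<0$ the two terms compete, and after $x\mapsto-x$ the inequality to establish is $B:=Sc/(S^{2}+d^{2})<A:=\a\arth(c/C)$. Since $S(C-S)=\tfrac12(1-\re^{-4|x|})<\tfrac12\le d^{2}$ for $\de\le\p/8$, we have $S/(S^{2}+d^{2})\le1/C$, so $B\le c/C$; and $\arth(c/C)\ge c/C$ gives $A\ge\a c/C>c/C\ge B$ because $\a>1$. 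Thus $u<0$ on the entire top edge for every $\de\in(0,\p/8]$, uniformly in $x\in\mathbb{R}$ and independently of $R$. On the vertical edges the expansion $\tgh\la=1-2\re^{-2\la}+\dots$ gives $u(\pm R,y)=-2(\a\pm1)\re^{-2R}\cos(2y)+\CO(\re^{-4R})$; since $\cos2y\ge\sin2\de>0$ along the edges and $\a>1$ makes both coefficients $\a\pm1$ positive (this is where the spacelike condition enters, on the left edge), $u<0$ there for $R$ large. Because the top‑edge bound is $R$‑independent, I fix $\de\in(0,\p/8]$ first and only afterwards send $R\to\infty$, which avoids any circularity.

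It remains to locate the roots. From $\eps(\la)=h-4J\i/\sh(2\la)$ the zeros of $\re^{-\eps/T}-1$ solve $\sh(2\la)=w_{k}:=4J\i/(h-2\p\i Tk)$, $k\in\mathbb{Z}$. Writing $2\la=a+\i b$ and eliminating $a$ via $\ch^{2}a-\sh^{2}a=1$, one finds that $\be=\sin^{2}b$ solves $\be^{2}-(1+|w_{k}|^{2})\be+(\Im w_{k})^{2}=0$; this quadratic equals $-(\Re w_{k})^{2}<0$ at $\be=1$ for $k\neq0$, so the root below $1$ is selected and $\sin^{2}b<1$, i.e.\ $\Im\la_{k}<\p/4$ strictly, while $\Im w_{k}>0$ forces $\Im\la_{k}>0$. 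As $|k|\to\infty$ one has $w_{k}\to0$ and $\la_{k}\to0$, so the hole roots accumulate only at the interior point $0$, have bounded real parts and $\sup_{k}\Im\la_{k}<\p/4$; a small $\de$ and a large $R$ therefore enclose them all in $\chsd$. Finally the two preimages of $w_{k}$ under $\sh(2\,\cdot\,)$ in the fundamental strip are $\la_{k}$ and its reflection $\i\p/2-\la_{k}$ about $\i\p/4$, the latter being exactly the particle roots; since $\Re(\i\p/2-\la_{k})=-\Re\la_{k}$ and $\Im(\i\p/2-\la_{k})=\p/2-\Im\la_{k}\in(\p/4+\de,\p/2)$, these lie inside $\cpsd$. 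The only genuinely delicate point is the uniform top‑edge comparison for $x<0$; once it is in place with $R$‑independent constants, the vertical edges, the symmetry reductions, part (i) and the root localization are all routine.
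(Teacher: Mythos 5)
Your proof is correct, and while your part (i) is essentially the paper's own argument, your part (ii) takes a genuinely different route. The paper never writes $u$ in closed form: it controls the sign of $u$ through $\6_y u = -\Im g'$ and the Cauchy--Riemann equations --- first on the lines ${\mathbb R}\pm\i\p/4$, where $u$ vanishes and $\6_y u$ has a definite sign because $\a>1$, and then, after the substitution $\a=\cth(2\ph)$ and an explicit computation of $\6_y u$ on all of ${\cal S}$, on the vertical lines $\Re\la=\pm R$ for $R$ large, where monotonicity of $u$ in $y$ between the zero lines forces the required signs; the containment of the hole and particle roots is settled by ``straightforward inspection''. You instead work with the closed form $u(\la)=\a\ln|\tgh\la|-\sh(2x)\cos(2y)/(\sh^2(2x)+\sin^2(2y))$, use the antisymmetry $u(\la+\i\p/2)=-u(\la)$ to dispose of $\cpsd$ in one stroke (the paper runs parallel sign arguments for the particle contour), use evenness in $y$ to halve the work again, and prove the remaining inequalities by elementary quantitative estimates; all the individual identities and bounds you invoke check out ($\ln|\tgh\la|=-\arth(c/C)$ on the top edge, $S(C-S)<\tfrac12\le d^2$ for $\de\le\p/8$, the expansion $u(\pm R+\i y)=-2(\a\pm1)\re^{-2R}\cos(2y)+\CO(\re^{-4R})$, and the quadratic for $\sin^2 b$ in the root analysis). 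What your route buys: the top-edge bound is uniform in $x\in{\mathbb R}$ and valid for every $\de\in(0,\p/8]$, so the quantifier order ($\de$ first, then $R$) is completely transparent, and you actually prove the root localization ($\Im\la_k$ strictly below $\p/4$ with bounded real parts and accumulation only at $0$, respectively $\i\p/2$) that the paper asserts without proof. What the paper's route buys: it is shorter, and since it relies only on the sign of $\a-\sin p(\la)$ rather than on a closed formula for $u$, it is the form of the argument more likely to survive for models such as XXZ, where $\cos p$ is replaced by a less explicit dispersion. Both proofs use the spacelike condition in equivalent places: yours through $\a\arth(c/C)>c/C$ on the top edge and the coefficient $\a-1$ on the left vertical edge, the paper's through $\a-\sin p>0$.
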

\begin{proof}
(i) For all $\a > 1$ and $\la \in {\mathbb R} \pm \i \p/4$ we have
\begin{equation} \label{derg}
     g' (\la) = \i p'(\la) \bigl(\a - \sin(p(\la))\bigr) \ne 0 \epc
\end{equation}
since
\begin{equation} \label{derip}
     \i p'(\la) = \frac{2}{\sh(2 \la)} \ne 0
\end{equation}
for all $\la \in {\cal S}$, and $p (\la) \in {\mathbb R}$
for all $\la \in {\mathbb R} \pm \i \p/4$ (see (\ref{pepsreal})).

(ii) Let $x = \Re \la$, $y = \Im \la$. Due to (\ref{derg}), (\ref{derip})
and the Cauchy-Riemann equations
\begin{equation} \label{derrealpartg}
     \6_y u (\la) = - \Im g'(\la) =
        \pm \frac{2}{\ch(2x)} \bigl(\a - \sin(p(x \pm \i\p/4))\bigr)
\end{equation}
for $\la = x \pm \i\p/4$. Now $\a > 1$ by assumption. Thus,
(\ref{derrealpartg}) implies that
\begin{equation} \label{signdu}
     \6_y u (\la)
        \begin{cases}
	   > 0 & \text{for $\la \in {\mathbb R} + \i\p/4$} \\
	   < 0 & \text{for $\la \in {\mathbb R} - \i\p/4$.}
        \end{cases}
\end{equation}
Since $u = 0$ for $\la \in {\mathbb R} \pm \i\p/4$ it follows that
$u (\la) < 0$ on the lines $\la \in {\mathbb R} \pm \i\p/4 \mp \i \de$
for small enough positive $\de$. Similarly, $u(\la) > 0$ on the lines
${\mathbb R} + \i\p/4 + \i \de$ and ${\mathbb R} + 3\i\p/4 - \i \de$.

Since $\a > 1$, there is a unique $\ph > 0$ such that $\a = \cth (2\ph)$.
Using this parameterization we find for any $\la = x + \i y \in {\cal S}$
that
\begin{multline}
     \6_y u(\la) = \frac{4}{\sh(2 \ph) |\ch(4\la) - 1|^2}
                   \bigl[\sh(4x) \sh(2(x + \ph)) \cos(2y) \sin(4y) \\
		         - (\ch(4x) \cos(4y) - 1) \ch(2(x + \ph)) \sin(2y)\bigr] \epp
\end{multline}
Thus, $\6_y u (\la) = 0$ if and only if
\begin{equation}
     \sin(2y) \ch(2(x - \ph))
        \biggl[\frac{\sh(4x) \sh(2(x + \ph)) + \ch(2(x + \ph))}{\ch(2(x - \ph))}
               - \cos(4y)\biggr] = 0 \epp
\end{equation}
Here the first term in the square bracket is unbounded from above
for $x \rightarrow \pm \infty$, implying that the only roots of
$\6_y u(\la)$ in $\cal S$ are at $y = 0, \p/2$ if $|x|$ is large enough.
Taking into account (\ref{signdu}) we see that, if the latter is the
case, then
\begin{equation} \label{signdu2}
     \6_y u (\la)
        \begin{cases}
	   > 0 & \text{for $y \in (0,\p/2)$} \\
	   < 0 & \text{for $y \in (- \p/4,0) \cup (\p/2,3\p/4)$.}
        \end{cases}
\end{equation}
It follows that $u(\la) < 0$ of the lines $\pm R + \i (-\p/4, \p/4)$,
while $u(\la) > 0$ on $\pm R + \i (\p/4, 3 \p/4)$, if $R > 0$ large
enough. The statement about the location of the particle and hole roots
follows by straightforward inspection of the integrands in
(\ref{ffseriesxxtrans}).
\end{proof}

\begin{proof}[Proof of the Theorem]
The function ${\cal D} \bigl(\{x_j\}_{j=1}^n, \{y_k\}_{k=1}^{n-1}\bigr)$
is symmetric separately in all $x_j$ and $y_k$. It satisfies
\begin{equation}
     {\cal D} \bigl(\{x_j\}_{j=1}^n, \{y_k\}_{k=1}^{n-1}\bigr) = 0
\end{equation}
if $x_j = x_k$ or $y_j = y_k$ for all $j \ne k$. Setting
\begin{equation}
     V_h (x) = \frac{\PH_p (x) \re^{\i (m p(x) - t \eps(x))}}
	            {\p \i \bigl(\re^{\eps (x)/T} - 1\bigr)} \epc \qd
     V_p (x) = \frac{\re^{- \i (m p(y) - t \eps(y))}}
	            {\p \i \PH_h (y) \bigl(1 - \re^{- \eps (y)/T}\bigr)}
\end{equation}
and using the above lemma we therefore obtain
\begin{align} \label{ffseriessplit}
     & \bigl\<\s_1^- \s_{m+1}^+ (t)\bigr\>_T = (-1)^m {\cal F} (m)
        \sum_{n=1}^\infty \frac{1}{n! (n-1)!}
	   \int_{{\cal C}_h^n} \rd^n x \: \biggl[ \prod_{j=1}^n V_h (x_j) \biggr]
	   \notag \\[-.5ex] & \mspace{180.mu} \times
	   \int_{{\cal C}_p^{n-1}} \rd^{n-1} y \: \biggl[ \prod_{k=1}^{n-1} V_p (y_k) \biggr]
		    {\cal D} \bigl( \{x_j\}_{j=1}^n, \{y_k\}_{k=1}^{n-1} \bigr)
        \notag \\[.5ex] & \mspace{18.mu}
        = (-1)^m {\cal F} (m) \sum_{n=1}^\infty \frac{1}{n! (n-1)!} \biggl(
	   \int_{\chsd^n} \rd^n x \: \prod_{j=1}^n V_h (x_j)
	   \notag \\[-.5ex] & \mspace{162.mu}
	   + n \int_{\chsd^{n-1}} \rd^{n-1} x \: \biggl[ \prod_{j=1}^{n-1} V_h (x_j) \biggr]
	   2 \p \i \res \bigl\{ \rd x_n \: V_h (x_n), x_n = \la_F^- \bigr\} \biggr)
	   \notag \\[.5ex] & \mspace{45.mu} \times \biggl(
	   \int_{\cpsd^{n-1}} \rd^{n-1} y \: \prod_{k=1}^{n-1} V_p (y_k)
	   \notag \\[-.5ex] & \mspace{72.mu} + (n - 1)
	   \int_{\cpsd^{n-2}} \rd^{n-2} y \: \biggl[ \prod_{k=1}^{n-2} V_p (y_k) \biggr]
	   2 \p \i \res \bigl\{ \rd y_{n-1} \: V_p (y_{n-1}), y_{n-1} = \la_F^+ \bigr\} \biggr)
	   \notag \\ & \mspace{360.mu}
	   \times {\cal D} \bigl(\{x_j\}_{j=1}^n, \{y_k\}_{k=1}^{n-1}\bigr) \epp
\end{align}
Here $\chsd$ and $\cpsd$ are the contours introduced in (\ref{defseccont}).
Notice that we consider $\res \bigl\{ \rd x \: V_h (x), x = \la_F^- \bigr\}$
as a functional acting on functions $f$ holomorphic in a disc $D_\e (\la_F^-)$
of sufficiently small radius $\e$ centered about $\la_F^-$ as
\begin{equation}
     \res \bigl\{ \rd x \: V_h (x), x = \la_F^- \bigr\} f
        = \int_{D_\e (\la_F^-)} \frac{\rd x}{2 \p \i} \: V_h (x) f(x) \epc
\end{equation}
and similarly for $\res \bigl\{ \rd y \: V_p (y), y = \la_F^+ \bigr\}$.
In particular,
\begin{subequations}
\begin{align}
     & 2 \p \i \res \bigl\{ \rd x \: V_h (x), x = \la_F^- \bigr\} 1
        = \frac{2 T \re^{\i m p_F} \PH_p (\la_F^-)}{\eps' (\la_F^-)} \epc \\
     & 2 \p \i \res \bigl\{ \rd y \: V_p (y), y = \la_F^+ \bigr\} 1
        = \frac{2 T \re^{\i m p_F}}{\PH_h (\la_F^+) \eps' (\la_F^+)} \epp
\end{align}
\end{subequations}

Equation (\ref{ffseriessplit}) implies that
\begin{equation} \label{sums}
     \bigl\<\s_1^- \s_{m+1}^+ (t)\bigr\>_T
        = (-1)^m {\cal F} (m) \sum_{\ell = 1}^4 S_\ell (m,t) \epc
\end{equation}
where the four series $S_\ell (m,t)$ can be written as follows.
\begin{subequations}
\begin{align}
     & S_1 (m,t) = \frac{2 T \re^{\i m p_F} \PH_p (\la_F^-)}{\eps' (\la_F^-)}
                   \sum_{n=0}^\infty S_1^{(n)} (m,t) \epc \\[1ex]
     & S_2 (m,t) = - \frac{1}{\sh^2 (\la_F^+ - \la_F^-)}
                   \biggl(\frac{2 T \re^{\i m p_F}}{\eps' (\la_F^-)}\biggr)^2
                   \frac{\PH_p (\la_F^-)}{\PH_h (\la_F^+)}
                   \sum_{n=0}^\infty S_2^{(n)} (m,t) \epc \\[1ex]
     & S_3 (m,t) = \frac{2 T \re^{\i m p_F}}{\PH_h (\la_F^+) \eps' (\la_F^-)}
                   \sum_{n=0}^\infty S_3^{(n)} (m,t) \epc \\[1ex]
     & S_4 (m,t) = \sum_{n=0}^\infty S_4^{(n)} (m,t)
\end{align}
\end{subequations}
with
\enlargethispage{-3ex}
\begin{subequations}
\begin{align}
     & S_1^{(n)} (m,t) = \frac{1}{(n!)^2}
	  \int_{\chsd^n} \mspace{-18.mu} \rd^n x \: \biggl[\prod_{j=1}^n V_h (x_j)\biggr]
	  \int_{\cpsd^n} \mspace{-18.mu} \rd^n y \: \biggl[\prod_{k=1}^n V_p (y_k)\biggr]
	  \notag \\ & \mspace{180.mu} \times
	  \biggl[\prod_{j=1}^n \frac{\sh^2(x_j - \la_F^-)}{\sh^2 (y_j - \la_F^-)}\biggr]
	  {\cal D} \bigl(\{x_j\}_{j=1}^n, \{y_k\}_{k=1}^n\bigr) \epc \\[1ex]
     & S_2^{(n)} (m,t) = \frac{1}{(n+1)!n!}
	  \int_{\chsd^{n+1}} \mspace{-18.mu} \rd^{n+1} x \:
	  \biggl[\prod_{j=1}^{n+1} V_h (x_j)
	         \frac{\sh^2(x_j - \la_F^-)}{\sh^2 (x_j - \la_F^+)}\biggr]
	  \notag \\ & \mspace{18.mu} \times
	  \int_{\cpsd^n} \mspace{-18.mu} \rd^n y \:
	  \biggl[\prod_{k=1}^n V_p (y_k)
	         \frac{\sh^2(y_k - \la_F^+)}{\sh^2 (y_k - \la_F^-)}\biggr]
	  \biggl[\prod_{j=1}^n
	         \frac{\sh^2(x_j - x_{n+1})}{\sh^2 (y_j - x_{n+1})}\biggr]
	  {\cal D} \bigl(\{x_j\}_{j=1}^n, \{y_k\}_{k=1}^n\bigr) \epc \\[1ex]
     & S_3^{(n)} (m,t) = - \frac{1}{(n+2)!n!}
	  \int_{\chsd^{n+2}} \mspace{-18.mu} \rd^{n+2} x \:
	  \biggl[\prod_{j=1}^{n+2} \frac{V_h (x_j)}{\sh^2(x_j - \la_F^+)} \biggr]
	  \sh^2 (x_{n+1} - x_{n+2})
	  \notag \\ & \times
	  \int_{\cpsd^n} \mspace{-20.mu} \rd^n y \:
	  \biggl[\prod_{k=1}^n V_p (y_k) \sh^2(y_k - \la_F^+)\biggr]
	  \biggl[\prod_{j=1}^n \prod_{k=1}^2
	         \frac{\sh^2(x_j - x_{n+k})}{\sh^2 (y_j - x_{n+k})}\biggr]
	  {\cal D} \bigl(\{x_j\}_{j=1}^n, \{y_k\}_{k=1}^n\bigr), \\[1ex]
     & S_4^{(n)} (m,t) = \frac{1}{(n+1)!n!}
	  \int_{\chsd^{n+1}} \mspace{-18.mu} \rd^{n+1} x \:
	  \biggl[\prod_{j=1}^{n+1} V_h (x_j)\biggr]
	  \int_{\cpsd^n} \mspace{-18.mu} \rd^n y \: \biggl[\prod_{k=1}^n V_p (y_k)\biggr]
	  \notag \\ & \mspace{180.mu} \times
	  \biggl[\prod_{j=1}^n \frac{\sh^2(x_j - x_{n+1})}{\sh^2 (y_j - x_{n+1})}\biggr]
	  {\cal D} \bigl(\{x_j\}_{j=1}^n, \{y_k\}_{k=1}^n\bigr) \epp
\end{align}
\end{subequations}

In order to show the convergence of the series and to estimate their
asymptotic behaviour, we have to establish bounds on the individual
terms. We start with the functions ${\cal D} \bigl(\{x_j\}_{j=1}^n,
\{y_k\}_{k=1}^n\bigr)$ and recall the Hadamard bound for the determinant
of an $n \times n$ matrix
\begin{equation} \label{hadamard}
     \Bigl|\det_{j,k = 1, \dots, n} (M_{jk}) \Bigr| \le
        \Bigl( \max_{j,k = 1, \dots, n} |M_{jk}| \Bigr)^n \cdot n^\frac{n}{2} \epp
\end{equation}
Since the contours $\chsd$ and $\cpsd$ are finite and disjoint, we can
use (\ref{hadamard}) to estimate
\begin{equation}
     \Bigl|{\cal D} \bigl(\{x_j\}_{j=1}^n, \{y_k\}_{k=1}^n\bigr)\Bigr| =
        \biggl| \det_{j,k = 1, \dots, n} \biggl(\frac{1}{\sh(x_j - y_k)}\biggr)\biggr|^2
	\le B^{2n} n^n \epc
\end{equation}
where
\begin{equation}
     B = \sup_{\substack{x \in \chsd \\ y \in \cpsd}}
         \biggl| \frac{1}{\sh(x - y)} \biggr| \epp
\end{equation}
Likewise we set
\begin{equation}
     C = \sup_{\substack{x \in \chsd \\ y \in \cpsd}}
         \biggl|\frac{\sh(x - \la_F^-)}{\sh(y - \la_F^-)}\biggr| \epp
\end{equation}
As follows from the above lemma, there exist $\k, c > 0$ such that
\begin{equation}
     \sup_{\substack{x \in \chsd \\ y \in \cpsd}}
        \bigl( \max \bigl\{ |V_h (x)|, |V_p (y)| \bigr\} \bigr) = \k \re^{- \tau c} \epp
\end{equation}
With this we obtain a bound on every individual term in the series $S_1$,
\begin{equation}
     |S_1^{(n)} (m,t)| \le \frac{1}{(n!)^2} \bigl(|\chsd||\cpsd|\bigr)^n C^{2n} \k^{2n}
                           \re^{- 2n \tau c} B^{2n} n^n
                       \le \frac{1}{n!} C_1^n  \re^{- 2n \tau c}
\end{equation}
for some constant $C_1 > 0$. This implies absolute convergence of the series $S_1$
and shows that, asymptotically for large $\tau$, the series behaves like
\begin{equation} \label{asys1}
     S_1 (m,t) = \frac{2 T \re^{\i m p_F} \PH_p (\la_F^-)}{\eps' (\la_F^-)}
                 \bigl(1 + {\cal O}(\re^{- 2 \tau c})\bigr) \epp
\end{equation}

In a similar way one obtains
\begin{subequations}
\begin{align}
     & |S_2^{(n)} (m,t)| \le \frac{1}{n!} C_2^n \re^{- (2n+1) \tau c} \epc \\[.5ex]
     & |S_3^{(n)} (m,t)| \le \frac{1}{n!} C_3^n \re^{- (2n+2) \tau c} \epc \\[.5ex]
     & |S_4^{(n)} (m,t)| \le \frac{1}{n!} C_4^n \re^{- (2n+1) \tau c} \epc
\end{align}
\end{subequations}
for constants $C_j > 0$, $j = 2, 3, 4$. It follows that the series $S_j$,
$j = 2, 3, 4$, converge absolutely and behave asymptotically as
\begin{subequations}
\label{asys234}
\begin{align}
     & S_2 (m,t) = {\cal O} (\re^{- \tau c}) \epc \\[.5ex]
     & S_3 (m,t) = {\cal O} (\re^{- 2 \tau c}) \epc \\[.5ex]
     & S_4 (m,t) = {\cal O} (\re^{- \tau c}) \epp
\end{align}
\end{subequations}
Inserting (\ref{asys1}), (\ref{asys234}) into (\ref{sums}) and recalling the
explicit form (\ref{deff}) of ${\cal F} (m)$ we have arrived at the statement
of the theorem.
\end{proof}
The theorem fixes the constant term of the asymptotics in 
the spacelike regime that remained undetermined in \cite{IIKS93b}.
Note that the function $\eps' (\la_F^-)$ can be easily calculated
explicitly,
\begin{equation}
     \eps' (\la_F^-) = - 2 h \sqrt{1 - \Bigl(\frac{h}{4J}\Bigr)^2} \epp
\end{equation}
For the other factors composing the constant $C(T,h)$ we did
not find any further simplification so far.

\section{Discussion}
For the interpretation of our result we would like to recall 
a Fredholm determinant representation of the transversal two-point
function (\ref{defcorrmp}) that was obtained in \cite{GKS19app},
where it was used for the asymptotic analysis of the correlation
function in the high-temperature limit. Referring to \cite{GKS19app}
we define the functions
\begin{equation}
     \ph(x,y) = \frac{\re^{y - x}}{\sh(y - x)}
\end{equation}
and
\begin{subequations}
\begin{align}
     & \Om = \int_{{\cal C}_h} \rd x \: V_h (x) \epc \\
     & E_h (x) = \int_{{\cal C}_h} \rd y \: V_h (y) \ph(y,x) \epc \\
     & V(x,y) = \int_{{\cal C}_h} \rd z \: V_h (z) \ph(z,x) \ph(z,y) \epp
\end{align}
\end{subequations}
Using these functions we define two integral operators $\widehat{V}$
and $\widehat{P}$ acting on functions on the contour ${\cal C}_p$,
\begin{subequations}
\begin{align}
     \widehat V f (x) & = \int_{{\cal C}_p} \rd y \: V_p (y) V(x,y) f(y) \epc \\
     \widehat P f (x) & = \frac{E_h(x)}{\Om} \int_{{\cal C}_p} \rd y \:
                             V_p (y) E_h (y) f(y) \epp
\end{align}
\end{subequations}
Then (cf.\ \cite{GKS19app}) the transversal correlation functions of
the XX chain admit the Fredholm determinant representation
\begin{equation} \label{freddetus}
     \bigl\<\s_1^- \s_{m+1}^+ (t)\bigr\>_T =
        (-1)^m {\cal F} (m) \Om (m,t)
	\det_{{\cal C}_p} \bigl(\id + \widehat V - \widehat P\bigr) \epp
\end{equation}

Comparing with the asymptotic behaviour of the correlation
function in the spacelike regime $m > 4Jt$ we see that
\begin{equation}
     \det_{{\cal C}_p} \bigl(\id + \widehat V - \widehat P\bigr)
        \sim 1 + {\cal O} (t^{-\infty}) \epc
\end{equation}
meaning that the Fredholm determinant collects the higher-order
corrections to the main asymptotics. This is the analogy with
the Borodin-Okounkov-Geronimo-Case formula \cite{BoOk00,GeCa79}
mentioned in the introduction.

On the level of the Fredholm determinant representation it is
easiest to compare our result with that of Its, Izergin, Korepin
and Slavnov \cite{IIKS93b}. For this purpose we rewrite their
integral operators acting on functions on the the unit circle as
integral operators acting on functions on ${\cal C} = [- \infty - \i \p/4,
+ \infty - \i \p/4] \cup [+ \infty + \i \p/4, - \infty + \i \p/4]$.
This is achieved by employing the map $z \mapsto \re^{\i p(\la)}$
to the Fredholm determinant representation in \cite{IIKS93b}.
Then
\begin{equation} \label{freddetiiks}
     \bigl\<\s_1^- \s_{m+1}^+ (t)\bigr\>_T =
        (-1)^m \bigl[ \det_{{\cal C}} \bigl(\id + \widehat W + \widehat Q\bigr)
	              - \det_{{\cal C}} \bigl(\id + \widehat W\bigr) \bigr] \epc
\end{equation}
where $\widehat W$ is an integrable operator with kernel
\begin{subequations}
\begin{align}
     W(\la, \m) & = \frac{\ch(\la) H(\la) - \ch(\m) H(\m)}{\sh(\la - \m)}
                    \frac{\re^{\i (m p(\m) - t \eps(\m))}}{\p (1 + \re^{\eps(\m)/T})} \epc \\
     H(\la) & = {\rm v.p.} \int_{\cal C} \frac{\rd \m}{\p}
                              \frac{\re^{- \i(m p(\m) - t \eps(\m))}}{\ch(\m) \sh(\m - \la)}
\end{align}
\end{subequations}
and $\widehat Q$ is a one-dimensional projector acting as
\begin{equation}
     \widehat Q f(\la) = \frac{1}{\ch(\la)} \int_{\cal C} \frac{\rd \m}{2\p \i}
                        \frac{\re^{\i(m p(\m) - t \eps(\m))}
			f(\m)}{\sh (\m) (1 + \re^{\eps(\m)/T})} \epp
\end{equation}
Comparing (\ref{freddetus}) and (\ref{freddetiiks}) we see that
in (\ref{freddetiiks}) the late-time, large-distance asymptotics
is entirely inside the Fredholm determinants and therefore harder
to analyse.

The fact that the late-time, large-distance asymptotic behaviour of
the transverse dynamical correlation functions of the XX chain,
including the constant term, can be obtained directly from the
series representation (\ref{ffseriesxxtrans}) raises a number of
interesting questions.
\begin{enumerate}
\item
Is a similar analysis possible for the XXZ quantum spin chain?
Unlike the XX chain treated in this work no Fredholm determinant
representation for its two-point function is expected to exist,
but a thermal form-factor series similar to (\ref{ffseriesxxtrans})
is still available \cite{GKKKS17}. As the structure of the
saddle-point equations is very similar, there seems to be a good
chance that the answer will turn out to be positive.
\item
What can be done in the timelike regime? Here all terms in the
series (\ref{ffseriesxxtrans}) contribute to the late-time,
large-distance asymptotics. A further resummation would be necessary.
Can we devise a method to find such a generalization?
\end{enumerate}

We would like to close with two remarks. First, in our recent work
\cite{GKSS19pp} we have compared the asymptotic formula of our
theorem with a numerical evaluation based on the Fredholm determinant
representation (\ref{freddetus}). As should be clear from the fact
that the corrections are exponentially small for large $m$ and $t$
the asymptotic formula turns out to be very efficient. For an example
see Fig.~\ref{fig:compnum}. Second, the constant term $C(t,h)$, equation
(\ref{asyconstant}), does not depend on $\a$. For this reason it should
agree with the constant obtained by Barouch and McCoy \cite{BaMc71}
in form of infinite double products in their analysis of the static
correlation functions (see equations (3.17)-(3.19) of their paper).
We have numerical evidence that this is indeed the case.
\begin{figure}
\begin{center}
\includegraphics[width=.80\textwidth]{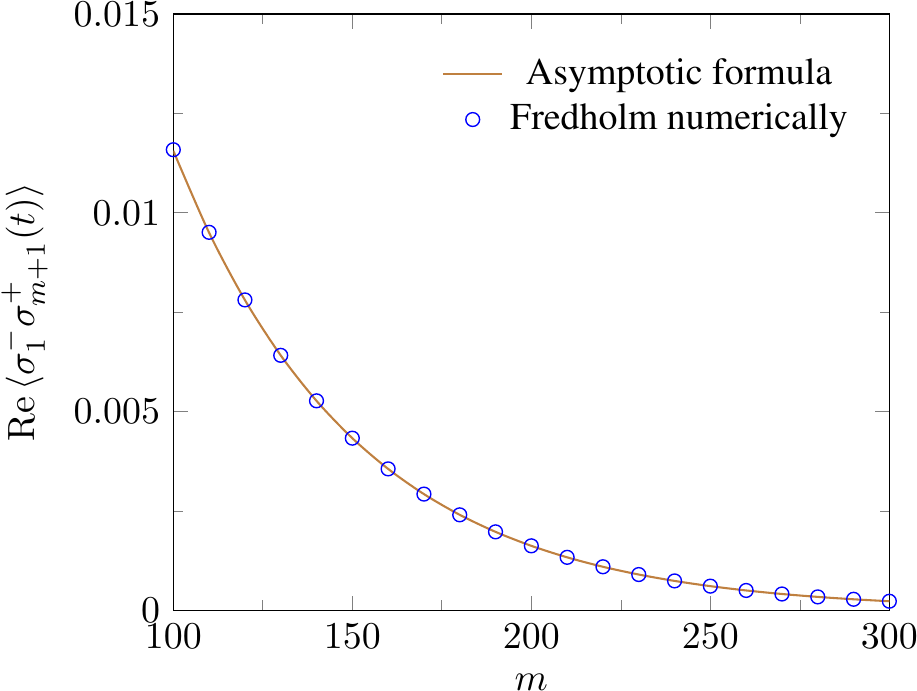}
\caption{\label{fig:compnum}
Real  part of  $\< \s^-_1 \sigma^+_{m+1}(t)\>$ as a function of $m$
for $T/J=0.05$, $h/J=0.1$ and $Jt=10$ evaluated numerically and from
(\ref{xxtransasy}).
}
\end{center}
\end{figure}

\vspace{.5ex}
\noindent {\bf Acknowledgements.}
The authors would like to thank Alexander Its and Nikita Slavnov for helpful
discussions. FG is supported by the Deutsche Forschungsgemeinschaft
within the framework of the research unit FOR 2316 `Correlations
in integrable quantum many-body systems'. The work of KKK is supported
by the CNRS and by the `Projet international de coop\'eration scientifique
No.\ PICS07877': \textit{Fonctions de corr\'elations dynamiques dans la
cha\^{\nodoti}ne XXZ \`a temp\'erature finie}, Allemagne, 2018-2020.
JS is supported by JSPS KAKENHI Grants, numbers 18K03452 and 18H01141.

\bibliographystyle{amsplain}
\bibliography{hub}

\end{document}